\documentclass[ 11pt,
 onecolumn,  triplespace,
journal]{IEEEtran}

\usepackage{cite}
\usepackage{amsmath,amssymb,amsfonts}
\usepackage{graphicx}
\usepackage{textcomp}
\usepackage{xcolor}
\def\BibTeX{{\rm B\kern-.05em{\sc i\kern-.025em b}\kern-.08em
    T\kern-.1667em\lower.7ex\hbox{E}\kern-.125emX}}
\usepackage{amssymb}
\usepackage{amsthm} 
\usepackage{color}
\usepackage{cite}
\usepackage{enumerate}
\usepackage{fancyhdr}
\pagestyle{plain}
\usepackage{graphicx}
\usepackage{bm}
\usepackage{soul,xcolor}
\def\x{{\mathbf x}}

\newtheorem{theorem}{Theorem}[section]

\newtheorem{definition}[theorem]{Definition}
\newtheorem{corollary}[theorem]{Corollary}

\usepackage{algorithm}
\usepackage{algorithmic}

\numberwithin{equation}{section}

\title{Kalman Filtering of Stationary Graph Signals
}

\author{Yang Chen, Yeonju Lee, Yao Shi and Qiyu Sun
\thanks{Chen is with School of Mathematics and Statistics and LCSM, Hunan Normal University,  Changsha, Hunan 410081, P. R. China; 
 Lee is with Division of Applied Mathematical Sciences
Korea University, Sejong Campus,
Sejong City  30072, South Korea; and
Shi and Sun are with Department of Mathematics, University of Central Florida, Orlando, Florida 32816, USA;
Emails: ychenmath@hunnu.edu.cn; leeyeonju08@korea.ac.kr; Yao.Shi@ucf.edu;   qiyu.sun@ucf.edu.\\
This project is partially supported by the National Natural
Science Foundation of China (12171490), the Research Foundation of Education Bureau of Hunan Province, China (24B0107) and the Natural
Science Foundation of Hunan Province, China (2025JJ50008) and   the National Research Foundation of Korea (NRF) grant 
funded by the Korea government (MSIT) 
(2021R1A2C1008360).}}

\begin{document}

\maketitle

\begin{abstract}

In this paper, we propose a novel definition of stationary graph signals, formulated with respect to a symmetric graph shift, such as the graph Laplacian. 
We show that stationary graph signals can be generated by transmitting white noise through polynomial graph  channels, and that their stationarity is 
preserved under  polynomial channel transmission.

In this paper, we also investigate Kalman filtering to dynamical systems characterized by polynomial state and observation matrices. We demonstrate that Kalman filtering maintains the stationarity of graph signals, while effectively incorporating both system dynamics and noise structure. In comparison to the static inverse filtering method and naive zero-signal strategy, the Kalman filtering procedure yields more accurate and adaptive signal estimates, highlighting its robustness and versatility in graph signal processing.

\end{abstract}

\begin{IEEEkeywords}
Stationary graph signal, Kalman filter, graph shift, polynomial filter, stationary dynamical system
\end{IEEEkeywords}

\section{Introduction}

Graph Signal Processing (GSP) provides a  versatile, powerful and 
mathematically grounded framework for representing, analyzing, and manipulating datasets that reside on networks and 
 irregular domains \cite{Shuman2013, Ortega2018, 
Dong2020, 
Isufi2024}. In this paper, we introduce a novel definition of stationary graph signals and consider Kalman filtering
to dynamical systems characterized by state and observation matrices being polynomial filters of a graph shift.  
 
Stationarity is a cornerstone of many signal analysis techniques, serving as a critical assumption that enables tractable modeling, spectral estimation, and filter design. In classical signal processing, stationary signals are characterized by covariance structures that remain invariant under time shifts. Extending this concept to  graph domain, the notion of stationary graph signals has been introduced  \cite{Girault2015, girault2017,  marques2017,  Perraudin2017, segarra2018, Jian2022, Zheng2025}.
 Within this framework, the covariance of random signals is invariant under a graph shift, such as the graph Laplacian. However, unlike in the temporal setting, existing definitions of stationary graph signals do not guarantee that such signals can be generated by passing  white noises through some polynomial graph filters, particularly when the graph shift has repeated eigenvalues. This limitation motivates us to introduce a new definition of stationarity, designed to guarantee that every stationary graph signal admits a generative model constructed from white noise and polynomial graph filtering; see Theorem \ref{gaussian.thm}.

The classical Kalman filter is a foundational algorithm for estimating the hidden states of dynamical systems from noisy observations, with widespread applications in aerospace, robotics, and signal processing \cite{Kalman1960, chui2009, humpherys2012}. In recent years, graph Kalman filtering has emerged as a powerful extension, enabling the tracking of dynamical signals over networks \cite{shi2009, sagi2023, buchnik2024, buchnik2024b}.
In this paper, we study graph Kalman filtering where both the system dynamics and observation models are expressed as polynomials  of a graph shift (and hence the state signals are stationary). We show that this approach yields more accurate and adaptive estimates than static inverse filtering methods and naive zero-signal strategy; see Section \ref{kalman.section}.

\section{Stationary graph signals}
\label{sec:preliminaries}

Graphs provide a flexible tool to represent the topology of networks and  irregular domains. In this paper, we assume that the graph ${\mathcal G}$ to model the underlying topological structure  is undirected and finite. The graph  ${\mathcal G}$  comprises a vertex set $V$, an edge set $E$, and a weighted adjacency matrix ${\bf W}=[W(i, j)]_{i,j\in V}$. 

Graph shifts are the backbone of  GSP, serving as the graph-domain analogue of time shifts in classical signal processing. Here a {\em graph shift} ${\bf S}$ on the graph ${\mathcal G}$, represented by a matrix ${\bf S}=[S(i,j)]_{i,j\in V}$, has its entries taking zero value except possibly on the diagonal $i=j$ or at endpoint pair $(i,j)$ of an edge in  $E$.
The illustrative examples are  the weighted adjacency matrix ${\bf W}$, the  degree matrix ${\bf D}$,  the graph Laplacian $\mathbf{L} = \mathbf{D} - \mathbf{W}$
 and their variants, where  the diagonal  degree matrix ${\bf D}$ has diagonal entries  $D(i, i)=\sum_{j\in V} W(i,j), i\in V$ \cite{Ortega2022, Emirov2022}. In this paper, we fix a graph shift ${\mathbf S}$ that is symmetric and real-valued.

Graph filters are fundamental tools in GSP designed to amplify or suppress components of graph signals, much like classical filters operating on time or spatial signals.  We say that a  graph filter ${\bf H}$ is a  {\em polynomial  filter} of the graph shift ${\bf S}$ if
\begin{equation}\label{polyfilter.def}
    \mathbf{H}=h(\mathbf{S})=h_0 \mathbf{I} + h_1 \mathbf{S} + \dots + h_{L}\mathbf{S}^{L}
\end{equation}
for some polynomial $h(t) = h_0 + h_1 t + \dots + h_{L}t^{L}$, where ${\bf I}$ is the identity matrix of appropriate size.
A key advantage of polynomial graph filters lies in their efficient implementation within the spatial domain, which involves only a finite number of local operations, with data exchanged between neighboring vertices. This locality makes polynomial filters 
well-suited for distributed networks and scalable architectures \cite{Emirov2022, Sandryhaila2013}.

For the graph shift ${\bf S}$ on the graph ${\mathcal G}$, we have
\begin{equation}\label{eigen.dec}
{\bf S}={\bf U} {\pmb\Lambda} {\bf U}^\top =\sum_{n=1}^N \lambda(n) {\bf u}_n {\bf u}_n^\top,
\end{equation}
where $N$ is the order of the  graph ${\mathcal G}$,  ${\bf U}=[{\bf u}_1, \ldots, {\bf u}_N]$
is an orthogonal matrix, and the diagonal matrix ${\pmb\Lambda}={\rm diag}[\lambda(1),\ldots, \lambda(N)]$ has eigenvalues of the graph shift ${\bf S}$ as its diagonal entries.
For the polynomial filter  ${\bf H}$ in \eqref{polyfilter.def}, applying the eigendecomposition 
\eqref{eigen.dec}  repeatedly yields
\begin{equation} \label{filter.Fourier}
{\bf H}= {\bf U} h({\pmb \Lambda}) {\bf U}^\top=\sum_{n=1}^N
h(\lambda(n)) {\bf u}_n {\bf u}_n^\top,
\end{equation}
where $h(\lambda(n)), 1\le n\le N$, are known as the frequency responses of the polynomial filter  ${\bf H}$.

Stationary graph signal is a fascinating extension of classical stationarity concept to  graph signals.
In this paper, we introduce a novel definition of stationary graph signals.

\begin{definition}\label{stationary.def}
A random graph signal $\x$ is  stationary if
it has zero mean ${\mathbb E}{\mathbf x}={\bf 0}$ and
its covariance matrix ${\rm cov}(\x)={\mathbb E}\big((\x-{\mathbb E}\x) (\x-{\mathbb E}\x)^\top)=
\mathbb E(\x\x^\top)
$ is a polynomial filter  of  the graph  shift  ${\bf S}$.
\end{definition}

Similar to  stationary signals in the classical signal processing with their covariance invariant under time shifts, we see that the covariance matrix of a stationary graph signal is shift-invariant with respect to the graph shift ${\bf S}$, i.e., 
${ \mathbf S}  {\rm cov}({\mathbf x}) = {\rm cov}({\mathbf x}) \mathbf{S}
$. 
We remark that  a shift-invariant covariance matrix need not, in general, correspond to a polynomial filter. For instance, on the unweighted cycle graph, it can be verified that a graph filter 
is a polynomial filter of the graph Laplacian if and only if it can be represented by a symmetric circulant matrix, while it is shift-invariant with respect to the graph Laplacian if and only if it is a  circulant matrix, regardless of symmetry. 
On the other hand, a shift-invariant filter is a polynomial filter provided that all eigenvalues $\lambda(n), 1\le n\le N$, of the graph shift ${\bf S}$  are distinct \cite{Emirov2022}.
In the literature, stationary graph signals are typically defined to have zero mean and covariance matrix being shift-invariant with respect to the graph  shift  ${\bf S}$
\cite{Girault2015,  Perraudin2017, segarra2018,  Jian2022,  Zheng2025, girault2017}.

We say that ${\bf e}$  is a standard white noise on the graph 
${\mathcal G}$ if its mean and covariance satisfy $\mathbb E({\bf e})={\bf 0}$  and ${\rm cov}({\bf e})={\bf I}$. In the following theorem, we  show that any stationary graph signal can be modeled as the output of a polynomial graph filter applied to  a white  noise.

\begin{theorem} \label{gaussian.thm}
A random graph signal ${\bf x}$ on the graph ${\mathcal G}$  is stationary if and only if
\begin{equation}\label{gaussian.thm.eq1} \x={\bf H}{\bf e}\end{equation}
where ${\bf H}$ is a  polynomial filter and
${\bf e}$ is a standard white noise on the graph ${\mathcal G}$.
\end{theorem}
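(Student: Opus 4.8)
The plan is to prove the two implications in turn, the reverse one being immediate and the forward one carrying all the weight. For the ``if'' part, suppose $\x=\mathbf H\mathbf e$ with $\mathbf H=h(\mathbf S)$ a polynomial filter and $\mathbf e$ a standard white noise. Then ${\mathbb E}\x=\mathbf H{\mathbb E}\mathbf e=\mathbf 0$, and since $\mathbf S$ is symmetric so is $\mathbf H$, hence
\begin{equation}
{\rm cov}(\x)={\mathbb E}\big(\mathbf H\mathbf e\mathbf e^\top\mathbf H^\top\big)=\mathbf H\,{\rm cov}(\mathbf e)\,\mathbf H^\top=\mathbf H^2=(h^2)(\mathbf S),
\end{equation}
which is again a polynomial filter of $\mathbf S$; together with ${\mathbb E}\x=\mathbf 0$ this is exactly Definition \ref{stationary.def}.

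For the ``only if'' part, let $\x$ be stationary, write $\mathbf C:={\rm cov}(\x)=g(\mathbf S)$ for a polynomial $g$, and note that $\mathbf C$, being a covariance matrix, is positive semidefinite, so by \eqref{eigen.dec} and \eqref{filter.Fourier} we have $g(\lambda(n))\ge 0$ for every $n$. The first step is to extract a \emph{polynomial square root}: since $\mathbf S$ has only finitely many distinct eigenvalues, Lagrange interpolation produces a polynomial $h$ with $h(\lambda(n))=\sqrt{g(\lambda(n))}$ at each of them, whence $\mathbf H:=h(\mathbf S)$ is a symmetric polynomial filter satisfying $\mathbf H^2=g(\mathbf S)=\mathbf C$ because $h^2$ and $g$ agree on the spectrum of $\mathbf S$. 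The second step is to produce the white noise. Let $\mathbf P$ be the orthogonal projection onto $\ker\mathbf H=\ker\mathbf C$ and $\mathbf H^{\dagger}$ the Moore--Penrose pseudoinverse of $\mathbf H$. From $\mathbf P\mathbf C=\mathbf 0$ one gets ${\mathbb E}\|\mathbf P\x\|^2=\tr(\mathbf P\mathbf C)=0$, so $\x\in{\rm range}\,\mathbf H$ almost surely. Picking a standard white noise $\mathbf f$ independent of $\x$ and setting $\mathbf e:=\mathbf H^{\dagger}\x+\mathbf P\mathbf f$, the cross terms vanish by independence and one checks ${\mathbb E}\mathbf e=\mathbf 0$, ${\rm cov}(\mathbf e)=\mathbf H^{\dagger}\mathbf C\mathbf H^{\dagger}+\mathbf P\mathbf P^\top=(\mathbf I-\mathbf P)+\mathbf P=\mathbf I$, and $\mathbf H\mathbf e=\mathbf H\mathbf H^{\dagger}\x=(\mathbf I-\mathbf P)\x=\x$ almost surely (using $\mathbf H\mathbf P=\mathbf 0$ and $\mathbf P\x=\mathbf 0$ a.s.).

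I expect the main obstacle to be the singular case $\det\mathbf H=0$, which occurs precisely when $g$ vanishes somewhere on the spectrum of $\mathbf S$: then the naive choice $\mathbf e=\mathbf H^{-1}\x$ is unavailable, and one must (i) verify that $\x$ is confined almost surely to ${\rm range}\,\mathbf H$ and (ii) pad the reconstruction with an independent white noise supported on $\ker\mathbf H$ so that $\mathbf e$ acquires identity covariance without disturbing the identity $\mathbf H\mathbf e=\x$. A secondary, routine point is that the polynomial square root need only interpolate $\sqrt g$ at the finitely many distinct eigenvalues of $\mathbf S$ --- which is all that Lagrange interpolation guarantees and all that is needed for $\mathbf H^2=\mathbf C$.
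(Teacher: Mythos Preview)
Your proof is correct and follows essentially the same route as the paper: both directions match, and in the necessity your pseudoinverse $\mathbf H^{\dagger}$ and kernel projection $\mathbf P$ are exactly the paper's operators $\sum_{\mu(n)\ne 0}\mu(n)^{-1/2}\mathbf u_n\mathbf u_n^\top$ and $\sum_{\mu(n)=0}\mathbf u_n\mathbf u_n^\top$, so your $\mathbf e=\mathbf H^{\dagger}\x+\mathbf P\mathbf f$ is the paper's construction written in operator rather than eigenbasis notation. The handling of the singular case (confining $\x$ to ${\rm range}\,\mathbf H$ via $\tr(\mathbf P\mathbf C)=0$ and padding with independent noise on $\ker\mathbf H$) is identical in substance.
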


\begin{proof}
For a random signal ${\bf x}$ with the expression \eqref{gaussian.thm.eq1}, we have
$ {\mathbb E}({\bf x})={\bf H} {\mathbb E}({\bf e})={\bf 0}
$
and
${\mathbb E}({\bf x}{\bf x}^\top)={\bf H} 
{\mathbb E}({\bf e} {\bf e}^\top) {\bf H}^\top= {\bf H} {\bf H}^\top={\bf H}^2
$.  
This proves that ${\bf x}$ is stationary and hence
the sufficiency.

Now we establish the necessity. Let $\x$ be a stationary graph signal with ${\rm cov}(\x)=h({\mathbf S})$ for some polynomial filter in \eqref{polyfilter.def}. By \eqref{filter.Fourier} and the positive semi-definiteness of the covariance matrix ${\rm cov}(\x)$, we see that the frequency responses of the  polynomial filter ${\rm cov}(\x)$ are non-negative, i.e.,
 \begin{equation} \mu(n):= h(\lambda(n)) \ge 0, \ 1\le n\le N. \end{equation}
Applying the Lagrange interpolation at the set of all distinct
eigenvalues of the graph shift ${\bf S}$, we can find a polynomial
$g$ that matches the square root $\sqrt{\mu(n)}$ of the 
frequency
responses of the polynomial filter ${\bf H}$
 at each eigenvalue,
\begin{equation}\label {gaussian.thm.pfeq1}
g(\lambda(n))=\sqrt{\mu (n)}, \ \ 1\le n\le N.
\end{equation}

Define
    \begin{equation}\label {gaussian.thm.pfeq2}
        {\bf e}:=\sum_{\mu(n)\ne 0} 
        \frac{1}{\sqrt{\mu(n)}}
     {\bf u}_n {\bf u}_n^T {\bf x} +
     \sum_{\mu(n)= 0} \tilde {x}_n  {\bf u}_n, 
    \end{equation} 
    where $\tilde x_n\sim \mathcal  N(0, 1)$, 
indexed by $n$ satisfying $\mu(n)=0$, are i.i.d. Gaussian noises on the real line and independent of the random signal ${\bf x}$. Then 
$$\mathbb E({\bf e})=
\sum_{\mu(n)\ne 0} 
        \frac{1}{\sqrt{\mu(n)}}
     {\bf u}_n {\bf u}_n^T \mathbb E({\bf x}) +
     \sum_{\mu(n)= 0} {\mathbb E}(\tilde {x}_n)  {\bf u}_n= 
     {\bf 0},$$ 
and 
    \begin{eqnarray*}
    {\rm cov} (\bf e) & \hskip-0.08in  = & \hskip-0.08in   
    \sum_{\mu(n)\ne 0, \mu(n')\ne 0}
    \frac{ {\bf u}_n {\bf u}_{n'}^\top }{\sqrt{\mu(n)\mu(n')}}
     {\mathbb E}\big( {\bf u}_n^\top {\bf x}{\bf x}^\top {\bf u}_{n'}\big)  +  \sum_{\mu(n)\ne 0, \mu(n')=0}
    \frac{ {\bf u}_n {\bf u}_{n'}^\top }{\sqrt{\mu(n)}}
     {\mathbb E}\big( \tilde { x}_{n'} {\bf u}_n^\top {\bf x}  \big) \\  
       & & +       \sum_{\mu(n)=0, \mu(n')\ne 0}
    \frac{ {\bf u}_n {\bf u}_{n'}^\top }{\sqrt{\mu(n')}}
     {\mathbb E}\big( \tilde {x}_{n}  {\bf u}_{n'}^\top {\bf x} \big)  +       \sum_{\mu(n)=\mu(n')=0}
 {\bf u}_n {\bf u}_{n'}^\top 
     {\mathbb E}\big( \tilde 
     {x}_n \tilde {x}_{n'} \big) \\  
       & \hskip-0.08in  = &  \hskip-0.08in     \sum_{\mu(n)\ne 0}
     {\bf u}_n {\bf u}_n^\top+ {\bf 0}+{\bf 0}+
      \sum_{\mu(n)=0}
     {\bf u}_n {\bf u}_n^\top={\bf I}.
\end{eqnarray*}
Hence the random graph signal ${\bf e}$ in \eqref{gaussian.thm.pfeq2} is a white  noise.

Define the polynomial filter ${\bf G}$ by
\begin{equation}
{\bf G}=g({\bf S})=\sum_{n=1}^N g(\lambda(n)) {\bf u}_n {\bf u}_n^T,
\end{equation}
where the polynomial $g$ is given in \eqref{gaussian.thm.pfeq1}. Then 
\begin{eqnarray*}
{\bf G} {\bf e} & \hskip-0.08in = & \hskip-0.08in 
\sum_{\mu(n)\ne 0}
\frac{g(\lambda(n))}{\sqrt{\mu(n)}} ({\bf u}_n^\top {\bf x}) {\bf u}_n+ \sum_{\mu(n)=0}
g(\lambda(n)) \tilde  {x}_n {\bf u}_n\\
& \hskip-0.08in  = & \hskip-0.08in  \sum_{\mu(n)\ne 0}
 ({\bf u}_n^\top {\bf x}) {\bf u}_n= \sum_{n=1}^N
 ({\bf u}_n^\top {\bf x}) {\bf u}_n={\bf x},
\end{eqnarray*}
where the first two  equalities follow from  \eqref{filter.Fourier} and \eqref{gaussian.thm.pfeq1}, the third one is true since ${\mathbb E} ({\bf u}_n^\top {\bf x})^2=0$ if $\mu(n)=0$, and the final equality  holds by the  orthonormal basis property of ${\bf u}_1, \ldots, {\bf u}_N$.  This completes the proof of the necessity.    
\end{proof}

By Theorem~\ref{gaussian.thm},  stationarity arises from transmitting white noise through a polynomial channel. Since the composition of two polynomial filters yields another polynomial filter,  we conclude that stationarity is preserved under  transmission through polynomial channels. 

\begin{corollary}
If ${\bf H}$ is a polynomial filter of the graph shift ${\bf S}$ and  $\x$ is  a stationary graph signal, then ${\bf H}{\x}$ is  stationary too.
\end{corollary}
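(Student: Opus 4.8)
The plan is to reduce to the generative model supplied by Theorem~\ref{gaussian.thm}. First I would apply the necessity part of that theorem to write $\x={\bf G}{\bf e}$, where ${\bf G}=g({\bf S})$ is a polynomial filter of ${\bf S}$ and ${\bf e}$ is a standard white noise on ${\mathcal G}$. Writing ${\bf H}=h({\bf S})$, this gives ${\bf H}\x={\bf H}{\bf G}{\bf e}=\big(h({\bf S})g({\bf S})\big){\bf e}=(hg)({\bf S}){\bf e}$, since powers of a fixed matrix commute and multiply to higher powers, so the product of the two polynomial filters is again a polynomial filter of ${\bf S}$. The sufficiency part of Theorem~\ref{gaussian.thm}, applied to the polynomial filter $(hg)({\bf S})$ and the white noise ${\bf e}$, then immediately yields that ${\bf H}\x$ is stationary.

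Alternatively, and perhaps more directly, I would verify Definition~\ref{stationary.def} for ${\bf H}\x$ by hand. Its mean is ${\mathbb E}({\bf H}\x)={\bf H}\,{\mathbb E}(\x)={\bf 0}$, and its covariance is ${\rm cov}({\bf H}\x)={\bf H}\,{\rm cov}(\x)\,{\bf H}^\top$. Writing ${\rm cov}(\x)=p({\bf S})$ for some polynomial $p$ (by stationarity of $\x$), and using that ${\bf S}$ is symmetric so that ${\bf H}^\top=h({\bf S})^\top=h({\bf S})$, this equals $h({\bf S})\,p({\bf S})\,h({\bf S})=(h^2p)({\bf S})$, which is a polynomial filter of ${\bf S}$. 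Hence ${\bf H}\x$ has zero mean and polynomial-filter covariance, i.e.\ it is stationary.

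There is essentially no obstacle here: the only points requiring (mild) care are that polynomials in the fixed matrix ${\bf S}$ are closed under multiplication and mutually commute — so that both ${\bf H}{\bf G}$ in the first approach and $h({\bf S})p({\bf S})h({\bf S})$ in the second remain polynomial filters — and that the symmetry of ${\bf S}$ makes ${\bf H}$ symmetric, which is exactly what lets ${\bf H}^\top$ be absorbed into the polynomial in the covariance computation. Both facts are already noted in the discussion preceding the statement, so once the right reformulation is chosen the argument is a one- or two-line consequence; I would present the first (generative-model) version for brevity and conceptual clarity, mentioning the direct computation as a remark.
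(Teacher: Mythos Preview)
Your proposal is correct and your first (generative-model) argument is exactly the paper's approach: the paper observes that by Theorem~\ref{gaussian.thm} stationarity amounts to being the output of a polynomial filter applied to white noise, and that composing two polynomial filters gives another polynomial filter, so ${\bf H}\x$ is again of this form. Your alternative direct verification of Definition~\ref{stationary.def} is also valid but not needed.
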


\section {Kalman filtering of stationary signals}
\label{kalman.section}

Let ${\bf x}_k, k\ge 0$, be a family of   random signals on the graph ${\mathcal G}$ 
 governed by the following dynamical system, 
\begin{equation}\label{kalman.def0} {\bf x}_k={\bf A}_{k} {\bf x}_{k-1} +\sigma_{k} {\bf e}_{k}, \  k\ge 1,\end{equation} where 
 the initial signal ${\bf x}_0$  is  assumed to be stationary. Here for each $k\ge 1$,  the transition  matrix ${\bf A}_k$ is a polynomial filter of the graph shift ${\bf S}$, ${\bf e}_k$ is a  standard white noise independent on ${\bf x}_{k-1}$, and 
 $\sigma_k>0$ represents the input noise level.

Write ${\rm cov}({\bf x}_0)=h_0({\mathbf S})$ and ${\bf A}_k=a_k({\mathbf S})$ for some polynomials $h_0$ and $a_k, k\ge 0$. By induction on $k$,  we can show that  stationarity is preserved under the dynamics defined by \eqref{kalman.def0}.
Moreover, the covariance matrices of stationary signals ${\bf x}_k, k\ge 1$,
admit the representation ${\rm cov}({\bf x}_k)
=h_k({\mathbf S})$  for some polynomials $h_k$ satisfying   the recursive relation,
\begin{equation} \label{kalman.eq1} 
h_k= a_{k}^2 h_{k-1} +\sigma_{k}^2, \ \ k\ge 1.\end{equation}
This recurrence captures the evolution of covariance
of stationary signals ${\bf x}_k, k\ge 0$, through 
polynomial channels, with additive noise contributions at each stage.

In this paper, we consider the scenario that  stationary signals ${\bf x}_k, k\ge 0$,  are observed from some polynomial channels and corrupted by  white noises.
 Specifically, the available observations  ${\bf z}_k$ of the true  state signals ${\bf x}_k$  are modeled by
\begin{equation}\label{Kalmanobservation}
 {\bf z}_k={\bf B}_{k} {\bf x}_{k} +\tilde \sigma_k \tilde {\bf e}_k, \  k\ge 1, 
 \end{equation} 
where  the  observation matrices
${\bf B}_k=b_k({\mathbf S})$ are
polynomial filters of the graph shift ${\bf S}$,   
$\tilde {\bf e}_k$ are   standard white noises independent on ${\bf x}_k$, and 
$\tilde \sigma_k>0$  quantify the observation noise levels.

The  Kalman filtering procedure  begins with  an initial  estimator $\widehat {\bf x}_0$ of the true initial ${\bf x}_0$ so that the estimation error $\widehat {\bf x}_0-{\bf x}_0$ is assumed to be stationary. The subsequent Kalman estimators $\widehat {\bf x}_k, k\ge 1$, are computed recursively through two steps:
1) In the  prediction step, we set
\begin{equation}\label{kalman.prediction1}
{\widehat {\mathbf x} }_{k-1/2}={\mathbf A} _{k}
\widehat {\mathbf x} _{k-1}, 
 \end{equation}
 where ${\mathbf A} _{k}$
 is the polynomial filter in \eqref{kalman.def0} to represent the system dynamics;
2) In the  Kalman gain step, we set
 \begin{equation} \label{kalman.prediction2}
{\widehat {\bf x}}_k=
{\widehat {\bf x}}_{k-1/2}+ {\bf K}_k ({\bf z}_k-{\bf B}_{k} {\widehat {\bf x}}_{k-1/2}), 
\end{equation}
where ${\bf B}_k$ is the observation matrix in \eqref{Kalmanobservation} and the  optimal Kalman filter ${\bf K}_k$ is determined by minimizing the mean squared error  $\min_{{\bf K}_k}  {\mathbb E}\| \widehat{{\bf x}}_k-{\bf x}_k\|_2^2$, where $\|{\bf x}\|_2=(\sum_{i\in V} |x(i)|^2)^{1/2}$ denotes the energy of a graph signal ${\bf x}=[x(i)]_{i\in V}$. 
In this paper, we show that this Kalman filtering procedure   preserves stationarity; 
see Theorem \ref{kalman.thm}.


Let
${\bf P}_k= {\rm cov}({\bf x}_{k}-\widehat {\bf x}_{k}),~ k\ge 0$, be the estimation error covariance matrices. 
Following the 
derivation in \cite{ shi2009},  the Kalman filters   ${\bf K}_k$
and the covariance  matrices ${\bf P}_k, k\ge 1$,
 satisfy the following iterative relations:
\begin{eqnarray}\label{Kalman.expression}
{\bf K}_k&\hskip-0.08in  = & \hskip-0.08in  \big({\bf A}_k {\bf P}_{k-1}
{\bf A}_k^\top + \sigma_{k}^2 {\bf I}\big)
{\bf B}_{k}^\top \big({\bf B}_{k} 
{\bf A}_k {\bf P}_{k-1}
{\bf A}_k^\top {\bf B}_k^\top +  \sigma_{k}^2
{\bf B}_{k} {\bf B}_{k}^\top +\tilde \sigma_k^2{\bf I}\big)^{-1}
\end{eqnarray}
and
\begin{equation} \label{Kalman.expression2}
{\bf P}_k  = ({\bf I}-{\bf K}_k {\bf B}_k) \big({\bf A}_k {\bf P}_{k-1}
{\bf A}_k^\top + \sigma_{k}^2 {\bf I}\big), \ k\ge 1.
\end{equation}
Applying \eqref{Kalman.expression}
and \eqref{Kalman.expression2} recursively, we see that the Kalman filters ${\bf K}_k$ and the error covariance matrices  ${\bf P}_k, k\ge 1$, are polynomials of the graph shift ${\bf S}$.

\begin{theorem}\label{kalman.thm} Let ${\bf x}_k, k\ge 0$, be stationary signals governed by the 
dynamical system \eqref{kalman.def0}, and $\widehat {\bf x}_k, k\ge 0$, be the Kalman estimators defined  by \eqref{kalman.prediction1} and \eqref{kalman.prediction2}. Then 
the estimators $\widehat {\bf x}_k$ and the estimation errors
$\widehat {\bf x}_k-{\bf x}_k, k\ge 1$, are stationary.
\end{theorem}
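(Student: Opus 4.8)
The plan is to run an induction on $k$ that simultaneously controls the mean and the joint second-order statistics of the pair consisting of the true state ${\bf x}_k$ and the estimation error $e_k:=\widehat{\bf x}_k-{\bf x}_k$. First I would turn the two update formulas \eqref{kalman.prediction1}--\eqref{kalman.prediction2} into a recursion for $e_k$. Substituting \eqref{Kalmanobservation} and \eqref{kalman.def0} and rearranging gives $\widehat{\bf x}_k-{\bf x}_k=({\bf I}-{\bf K}_k{\bf B}_k)({\bf A}_k\widehat{\bf x}_{k-1}-{\bf x}_k)+\tilde\sigma_k{\bf K}_k\tilde{\bf e}_k$, and since ${\bf A}_k\widehat{\bf x}_{k-1}-{\bf x}_k={\bf A}_ke_{k-1}-\sigma_k{\bf e}_k$ we obtain
\begin{equation*}
e_k=({\bf I}-{\bf K}_k{\bf B}_k){\bf A}_k\,e_{k-1}-\sigma_k({\bf I}-{\bf K}_k{\bf B}_k){\bf e}_k+\tilde\sigma_k{\bf K}_k\tilde{\bf e}_k,\qquad k\ge 1,
\end{equation*}
which together with ${\bf x}_k={\bf A}_k{\bf x}_{k-1}+\sigma_k{\bf e}_k$ writes $({\bf x}_k,e_k)$ as a fixed linear image of $({\bf x}_{k-1},e_{k-1})$ plus a linear image of the fresh white noise $({\bf e}_k,\tilde{\bf e}_k)$. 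The structural point I would lean on is the fact---already recorded in the excerpt---that ${\bf K}_k$ (hence ${\bf I}-{\bf K}_k{\bf B}_k$, ${\bf A}_k$, and all their products and scalar multiples) is a polynomial filter of ${\bf S}$; thus every coefficient matrix here lies in the commutative algebra of polynomial filters of ${\bf S}$, whose members are symmetric and pairwise commuting.

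Next I would carry the induction hypothesis: ${\mathbb E}{\bf x}_k={\mathbb E}e_k={\bf 0}$, and each of the three matrices ${\rm cov}({\bf x}_k)$, ${\rm cov}(e_k)$ and ${\mathbb E}({\bf x}_ke_k^\top)$ is a polynomial filter of ${\bf S}$. The base case $k=0$ holds because ${\bf x}_0$ and $e_0=\widehat{\bf x}_0-{\bf x}_0$ are stationary by assumption, together with the natural additional hypothesis that the cross term ${\mathbb E}({\bf x}_0e_0^\top)$ is also a polynomial filter (this is automatic, for instance, when $\widehat{\bf x}_0={\bf 0}$). For the inductive step, taking expectations of the two recursions and using ${\mathbb E}{\bf e}_k={\mathbb E}\tilde{\bf e}_k={\bf 0}$ yields ${\mathbb E}{\bf x}_k={\mathbb E}e_k={\bf 0}$. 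For the second moments I would expand ${\mathbb E}({\bf x}_k{\bf x}_k^\top)$, ${\mathbb E}(e_ke_k^\top)$ and ${\mathbb E}({\bf x}_ke_k^\top)$ from the recursions; every term that pairs a factor ${\bf e}_k$ or $\tilde{\bf e}_k$ with a time-$(k{-}1)$ quantity or with the other noise drops out by the independence built into \eqref{kalman.def0} and \eqref{Kalmanobservation} (and the usual mutual independence of the noise sequences and initial data implicit in \eqref{Kalman.expression}--\eqref{Kalman.expression2}), leaving only sums of products of the form $(\text{polynomial filter})\times(\text{polynomial filter from the hypothesis})\times(\text{polynomial filter})$ together with scalar multiples of ${\bf I}$ and of products such as $({\bf I}-{\bf K}_k{\bf B}_k){\bf K}_k$. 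Since the polynomial filters of ${\bf S}$ are closed under sums, products and transposition, all three matrices remain polynomial filters, closing the induction. (Equivalently one may package this as a single covariance recursion $\Sigma_k={\bf T}_k\Sigma_{k-1}{\bf T}_k^\top+{\bf R}_k{\bf R}_k^\top$ for the $2N\times 2N$ covariance $\Sigma_k$ of $({\bf x}_k^\top,e_k^\top)^\top$, with ${\bf T}_k$ and ${\bf R}_k$ block matrices all of whose blocks are polynomial filters, so that every block of $\Sigma_k$ is one too.)

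Finally I would read off the theorem. For $k\ge 1$ the error $e_k=\widehat{\bf x}_k-{\bf x}_k$ has zero mean and covariance a polynomial filter, hence is stationary; in fact ${\rm cov}(e_k)={\bf P}_k$, which was already shown to be a polynomial filter. For the estimator, $\widehat{\bf x}_k={\bf x}_k+e_k$ gives ${\mathbb E}\widehat{\bf x}_k={\bf 0}$ and
\begin{equation*}
{\rm cov}(\widehat{\bf x}_k)={\rm cov}({\bf x}_k)+{\rm cov}(e_k)+{\mathbb E}({\bf x}_ke_k^\top)+\big({\mathbb E}({\bf x}_ke_k^\top)\big)^\top,
\end{equation*}
a sum of polynomial filters, so $\widehat{\bf x}_k$ is stationary as well. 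I expect the one genuinely non-routine point to be that ${\mathbb E}({\bf x}_ke_k^\top)$ is \emph{not} annihilated by noise independence the way ${\rm cov}({\bf x}_k)$ and ${\rm cov}(e_k)$ are, so this cross term must be carried through the induction rather than treated as an afterthought (and its value at $k=0$ mildly constrains the initial estimator $\widehat{\bf x}_0$); everything else is the bookkeeping of which terms vanish and the closure of the polynomial-filter algebra, the scalar mean and variance recursions being essentially \eqref{kalman.eq1} and \eqref{Kalman.expression2}.
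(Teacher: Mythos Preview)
Your argument is correct and follows the same skeleton as the paper: the paper's ``proof'' is really the sentence immediately preceding the theorem, namely that iterating \eqref{Kalman.expression}--\eqref{Kalman.expression2} shows ${\bf K}_k$ and ${\bf P}_k$ are polynomial filters of ${\bf S}$, after which stationarity of the error $\widehat{\bf x}_k-{\bf x}_k$ is immediate since ${\bf P}_k={\rm cov}(\widehat{\bf x}_k-{\bf x}_k)$. Your induction on the polynomial-filter algebra is exactly this argument spelled out.

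Where you go beyond the paper is in treating the stationarity of the estimator $\widehat{\bf x}_k$ itself. The paper never writes down ${\rm cov}(\widehat{\bf x}_k)$ or explains why it is a polynomial filter; you correctly observe that this does \emph{not} follow from ${\rm cov}({\bf x}_k)$ and ${\bf P}_k$ alone, because the decomposition ${\rm cov}(\widehat{\bf x}_k)={\rm cov}({\bf x}_k)+{\bf P}_k+{\mathbb E}({\bf x}_ke_k^\top)+{\mathbb E}(e_k{\bf x}_k^\top)$ involves the cross term ${\mathbb E}({\bf x}_ke_k^\top)$, and you carry this term through the induction. You also correctly flag that the base case needs ${\mathbb E}({\bf x}_0e_0^\top)$ to be a polynomial filter, a mild hypothesis the paper does not state but which is satisfied in the case $\widehat{\bf x}_0={\bf 0}$ used in their simulations (then $e_0=-{\bf x}_0$ and the cross term is $-{\rm cov}({\bf x}_0)$). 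So your proof is the paper's argument made honest for the $\widehat{\bf x}_k$ half of the claim.
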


Write ${\bf K}_k= g_k({\mathbf S})$ and ${\bf P}_k=p_k({\mathbf S})$ for some polynomials $g_k$ and $p_k, k\ge 1$. Denote
 the minimal polynomial of the graph shift ${\bf S}$ by $p_{\mathbf S}$, and we say that   $f_1=f_2 \ \ {\rm mod} \  p_{\mathbf S}$ for polynomials $f_1$ and $f_2$  if $(f_1-f_2)/p_{\mathbf S}$ is a polynomial.
Applying \eqref{Kalman.expression}
and \eqref{Kalman.expression2} recursively, one may verify that these polynomials satisfy the following:
 \begin{equation}\label{kalmanfilter.Fourier}
 (a_k^2b_k^2 p_{k-1}+\sigma_k^2 b_k^2+\tilde \sigma_k^2) g_k= 
 (a_k^2 p_{k-1}+\sigma_k^2) b_k
  \ {\rm mod} \ p_{\mathbf S},
\end{equation}
and
\begin{equation} \label{covariance.Fourier}
(a_k^2b_k^2 p_{k-1}+\sigma_k^2 b_k^2+\tilde \sigma_k^2) p_k = \tilde \sigma_k^2 (a_k^2 p_{k-1}+\sigma_k^2) \  {\rm mod} \ p_{\mathbf S}.
\end{equation}
 
By \eqref{covariance.Fourier} and the observation that $p_{\mathbf S}({\mathbf S})=0$, 
we obtain the following expression for the estimation error covariance for the Kalman filtering:
\begin{equation}\label{Pk.eq2}
{\bf P}_k
 = \tilde \sigma_k^2 ({\bf A}_k^2 {\bf P}_{k-1}+\sigma_k^2 {\bf I}) 
 ( {\bf A}_k^2 {\bf B}_k^2 {\bf P}_{k-1}+\sigma_k^2 {\bf B}_k^2+\tilde \sigma_k^2{\bf I})^{-1}.
\end{equation}
Considering the scenario that the observation matrices  ${\bf B}_k$ are all-pass filters (and hence they are invertible), one may neglect  dynamics in the system \eqref{kalman.def0} and  consider the  inverse problem \eqref{Kalmanobservation} only. Then  a static inverse filtering estimator for the state signal  ${\bf x}_k$ is given by 
\begin{equation}\label{inversereconstruction}
   \tilde {\bf x}_k= {\bf B}_k^{-1} {\bf z}_k,\ k\ge 1.
\end{equation}
For the above  estimator $\tilde {\bf x}_k$, one may verify that the
estimation error covariance  matrix is 
${\rm cov}(\tilde {\bf x}_k- {\bf x}_{k})= \tilde \sigma_k^2 ({\bf B}_k)^{-2}
$.

   We say that 
   positive semi-definite matrices ${\bf A}$ and ${\bf B}$ satisfy ${\bf A}\prec {\bf B}$ if ${\mathbf B}-{\bf A}$ is strictly positive definite. 
  Comparing the inverse filtering approach \eqref{inversereconstruction} with the Kalman filtering procedure, we observe that the Kalman filtering  incorporates both the system dynamics and noise structure, yielding a more accurate and adaptive estimation than the 
  inverse filtering. 

\begin{corollary} \label{Kalmaninverse.cor}
Let  the observation matrices ${\bf B}_k, k\ge 1$, are  all-pass filters,
and  $\tilde {\bf x}_k, k\ge 1$, are the reconstructed signals from solving the inverse problem 
 \eqref{inversereconstruction}.
Then 
     \begin{equation}
     {\rm cov}(\widehat {\bf x}_k-{\bf x}_k)\prec 
     {\rm cov}(\tilde {\bf x}_k-{\bf x}_k), \ k\ge 1.
     \end{equation}
\end{corollary}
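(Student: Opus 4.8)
The plan is to reduce everything to a comparison of frequency responses. By Theorem~\ref{kalman.thm} and the remarks preceding it, the error covariance ${\bf P}_k={\rm cov}(\widehat{\bf x}_k-{\bf x}_k)$ is a polynomial filter of the graph shift ${\bf S}$; so are ${\bf A}_k$, ${\bf B}_k$, the previous error covariance ${\bf P}_{k-1}$, and hence the inverse-filtering error covariance ${\rm cov}(\tilde{\bf x}_k-{\bf x}_k)=\tilde\sigma_k^2{\bf B}_k^{-2}$. Consequently all of these matrices, and their difference ${\rm cov}(\tilde{\bf x}_k-{\bf x}_k)-{\bf P}_k$, are diagonalized by the same orthogonal matrix ${\bf U}$ in \eqref{eigen.dec}, so it suffices to show that the frequency response of ${\rm cov}(\tilde{\bf x}_k-{\bf x}_k)-{\bf P}_k$ at every eigenvalue $\lambda(n)$, $1\le n\le N$, is strictly positive.

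First I would fix $\lambda=\lambda(n)$ and write $p:=p_{k-1}(\lambda)$, $a:=a_k(\lambda)$, $b:=b_k(\lambda)$. Three sign facts are available: $p\ge 0$, because ${\bf P}_{k-1}$ is a genuine covariance matrix and hence positive semidefinite; $b^2=1$, because ${\bf B}_k$ is an all-pass filter with real frequency responses; and $\sigma_k,\tilde\sigma_k>0$ by hypothesis. Evaluating the closed form \eqref{Pk.eq2} for ${\bf P}_k$ at $\lambda$ gives the frequency response $\tilde\sigma_k^2(a^2p+\sigma_k^2)/(a^2p+\sigma_k^2+\tilde\sigma_k^2)$ (using $b^2=1$), whereas the frequency response of $\tilde\sigma_k^2{\bf B}_k^{-2}$ at $\lambda$ is $\tilde\sigma_k^2$.

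Then I would subtract the two: a one-line computation yields
\[
\tilde\sigma_k^2-\frac{\tilde\sigma_k^2(a^2p+\sigma_k^2)}{a^2p+\sigma_k^2+\tilde\sigma_k^2}=\frac{\tilde\sigma_k^4}{a^2p+\sigma_k^2+\tilde\sigma_k^2},
\]
and the right-hand side is strictly positive since $a^2p\ge 0$ and $\sigma_k^2,\tilde\sigma_k^2>0$. Hence every eigenvalue of the polynomial filter ${\rm cov}(\tilde{\bf x}_k-{\bf x}_k)-{\bf P}_k$ is positive, i.e.\ this matrix is strictly positive definite, which is precisely ${\bf P}_k\prec{\rm cov}(\tilde{\bf x}_k-{\bf x}_k)$.

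I do not anticipate a real obstacle. The only points that need care are: justifying that every matrix in sight is a polynomial in ${\bf S}$, so that the inequality becomes scalar at each eigenvalue --- this is already noted after \eqref{Kalman.expression}--\eqref{Kalman.expression2} and in Theorem~\ref{kalman.thm}; and correctly feeding in the positivity data $p\ge 0$, $\sigma_k^2>0$, $\tilde\sigma_k^2>0$ that make the final fraction strictly positive. As a side remark, if one only assumes ${\bf B}_k$ invertible (not necessarily all-pass), the same subtraction gives the difference $\tilde\sigma_k^4/(b^2(a^2b^2p+\sigma_k^2b^2+\tilde\sigma_k^2))>0$, so the conclusion of the corollary is in fact robust to dropping the all-pass assumption.
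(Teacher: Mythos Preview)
Your argument is correct and is precisely the computation the paper has in mind: the corollary is stated without a formal proof, but the surrounding text supplies the two ingredients you use, namely the closed form \eqref{Pk.eq2} for ${\bf P}_k$ and the identity ${\rm cov}(\tilde{\bf x}_k-{\bf x}_k)=\tilde\sigma_k^2{\bf B}_k^{-2}$, and the comparison is immediate once one passes to frequency responses. Your side remark that invertibility of ${\bf B}_k$ already suffices (the all-pass hypothesis is not needed) is also correct.
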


Consider another scenario in which the observation noise levels
$\tilde \sigma_k, k\ge 1$, are significantly high, which renders the  observations ${\bf z}_k$ unreliable for 
estimating the true state signal  ${\bf x}_k$. In such cases, a naive  strategy is to set the estimator to the zero signal, i.e.,
$\bar {\bf x}_k= {\mathbb E} {\bf x}_k=0$, which satisfy the
dynamic system \eqref{kalman.def0} in the absence of noises,  i.e.,  when $\sigma_k=0, k\ge 1$. 
By \eqref{kalman.eq1} and \eqref{Pk.eq2}, we  conclude that, despite the high observation noise, the Kalman filter adaptively balances the prediction and correction steps, yielding a better estimation than the zero estimator,\begin{equation}
     {\rm cov}(\widehat {\bf x}_k-{\bf x}_k)\prec  {\rm cov}({\bf x}_k-\bar {\bf x}_k)= {\rm cov}({\bf x}_k), \ k\ge 1.
     \end{equation}
This advantage arises from the  Kalman filter’s ability to exploit prior information encoded in the system model and the covariance structure, even when the  observations are weakly informative. 
 As a result, Kalman filtering consistently outperforms the default zero-signal strategy in high-noise environments.

\section{Numerical demonstration}

In this section, we evaluate the performance of the proposed Kalman filtering method on a dynamical system  over the unweighted cycle graph 
${\mathcal C}_N=(V_N, E_N, {\bf W}_N)$, where $V_N=\{1, \ldots, N\}$ and 
$E_N=\{(n, m)~ | ~  m-n=\pm 1 \ {\rm mod}\  N\}$ and ${\bf W}_N=[W_N(n,m)]_{1\le n, m\le N}$ with $W_N(n,m)=1$ for $(n,m)\in E_N$ and $0$ otherwise.  In our simulations, we  adopt the graph  Laplacian  ${\mathbf L}=2{\bf I}-{\bf W}_N$ as the graph shift ${\bf S}$ to define 
stationary graph signals. 
In our simulations, we also assume the dynamical  system and observation models are time-invariant, i.e.,   both the state and observation matrices, as well as noise levels, remain constant over time. Specifically, the dynamical system evolves according to \begin{equation}\label{circlekalman.def0} {\bf x}_k = {\bf A} {\bf x}_{k-1} + \sigma {\bf e}_k,  \end{equation} and the observations are given by \begin{equation}\label{circleKalmanobservation} {\bf z}_k = {\bf B} {\bf x}_k + \tilde{\sigma} \tilde{\bf e}_k, \quad 1\le k \le M, \end{equation} where 
${\bf A}$ and ${\bf B}$
 are polynomials  of the graph Laplacian, $\sigma, \tilde \sigma>0$ are the noise levels, and ${\bf e}_k, \tilde {\bf e}_k$ are i.i.d. Gaussian noises with normal distribution, and $M$  represents the  number of time steps of our simulations. The initial state
 ${\bf x}_0$ and its estimate $\widehat {\bf x}_0$ are both assumed to be zero signals.
 Denote the reconstructed stationary signals  via the Kalman filtering procedure \eqref{kalman.prediction2}
and via the inverse filtering method \eqref{inversereconstruction} by  $\widehat {\bf x}_k$ and $\tilde {\bf x}_k, 1\le k\le M$, respectively.
We remark that  in the procedure
 \eqref{inversereconstruction} to obtain $\tilde {\bf x}_k, 1\le k\le M$, 
the inverse of the observation matrix will be replaced by its Moore-Penrose pseudo-inverse if it is not invertible.

 Fig. \ref{Kalman.fig} illustrates the performance of Kalman filtering and inverse filtering for the dynamical system \eqref{circlekalman.def0} over the cycle graph ${\mathcal C}_N$ with $N=30$  and $M=100$ steps, where  ${\bf A}={\bf L}/4$,
 ${\mathbf B} 
 ={\bf W}_N/2$, $0\le \sigma\le 1$ and $0\le \tilde \sigma\le 1$.
The top  plots of Fig. \ref{Kalman.fig} display 
the average relative reconstruction errors
$\min\big(\frac{1}{2}\log_{10}
\frac{1}{M}\sum_{k=1}^M\|\widehat {\bf x}_k-{\bf x}_k\|_2^2/\|{\bf x}_k\|_2^2, \frac{1}{2}\big)$ and
$\min\big(\frac{1}{2} \log_{10}
\frac{1}{M}\sum_{k=1}^M\|\tilde {\bf x}_k-{\bf x}_k\|_2^2/\|{\bf x}_k\|_2^2, \frac{1}{2}\big)$
for the Kalman filtering  (left) and inverse filtering (right)  over 30 trials, respectively. The  heatmaps  on the top row of Fig.  \ref{Kalman.fig} highlight the clear advantage of Kalman filtering. Unlike the inverse filtering, which relies on a fixed linear inversion and ignores temporal dynamics,
the Kalman filtering continuously refines its estimates by integrating both the system model and noisy observations, resulting in significantly lower errors, especially when the observation noise $\tilde\sigma$ increases.
 
Presented at the bottom of Fig. \ref{Kalman.fig}  is a comparison of reconstructed signals and their  corresponding energy profiles
with the dynamical noise level $\sigma=0.3$ and  observation noise level  $\tilde \sigma=0.5$, respectively. On the bottom left, the energy sequence $\|{\bf x}_k\|_2, 1\le k\le M,$ of the original state signal (red) is shown alongside the reconstruction energies $\|\widehat {\bf x}_k\|_2, 1\le k\le M,$ via Kalman filtering (green), and the  reconstruction energies   $\|\tilde {\bf x}_k\|_2, 1\le k\le M,$ via inverse filtering (blue), respectively.
Presented at the bottom right is the temporal trajectory  at vertex 8: the original signal ${\bf x}_k(8)$ (red), the Kalman-filtered reconstruction signal
 $\widehat {\bf x}_k(8)$ (green), and the inverse-filtered reconstruction signal $\tilde {\bf x}_k(8)$ (blue), over the  time range $1\le k\le M$.
The numerical results indicate that the inverse filtering method amplifies uncertainty by back-projecting measurement noise through the observation matrix 
${\bf B}$, resulting in reconstructions with high variance and instability. In contrast, Kalman filtering leverages the state transition dynamics encoded in ${\bf A}$ to effectively suppress noise and accurately track the underlying signal trend. This dynamical integration enables the Kalman filtering approach to significantly outperform the static inverse filtering method,  offering superior stability and estimation accuracy.

\begin{figure}[t] 

  \centering
  \centerline{\includegraphics[width=16.8cm, height=6.8cm]{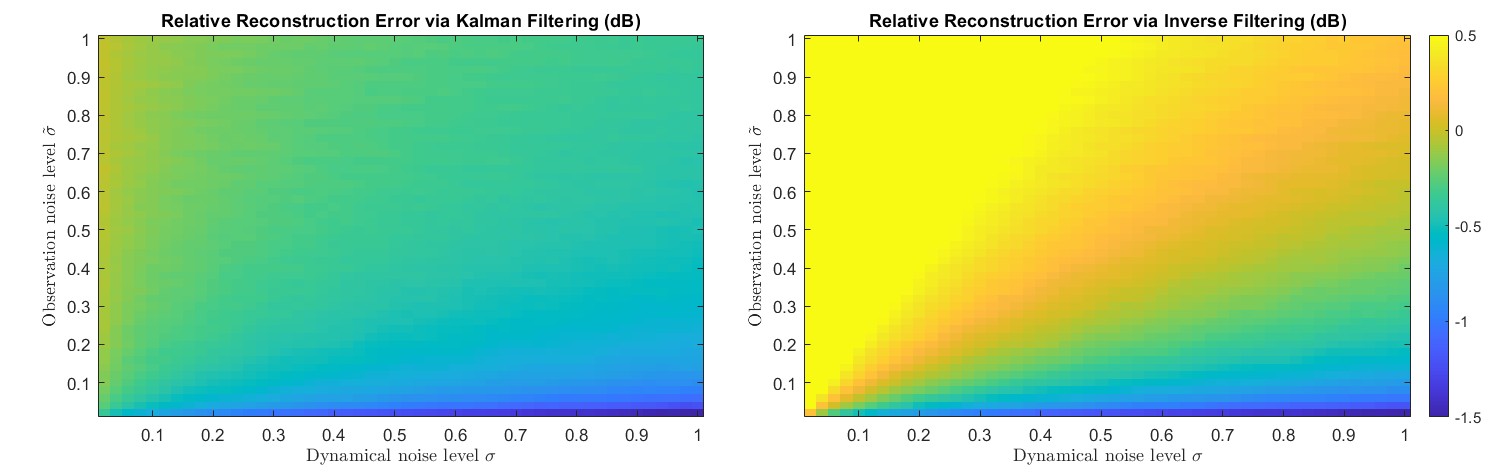}}

  \centering
  \includegraphics[width=8.5cm, height=5.8cm]{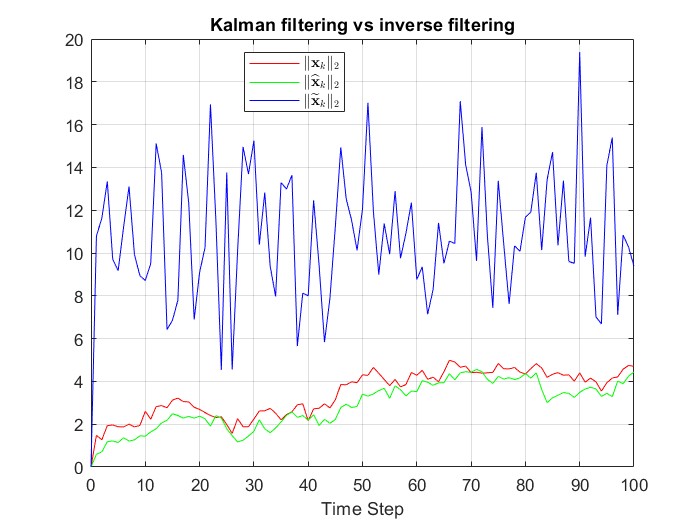}
  \includegraphics[width=8.5cm, height=5.8cm]{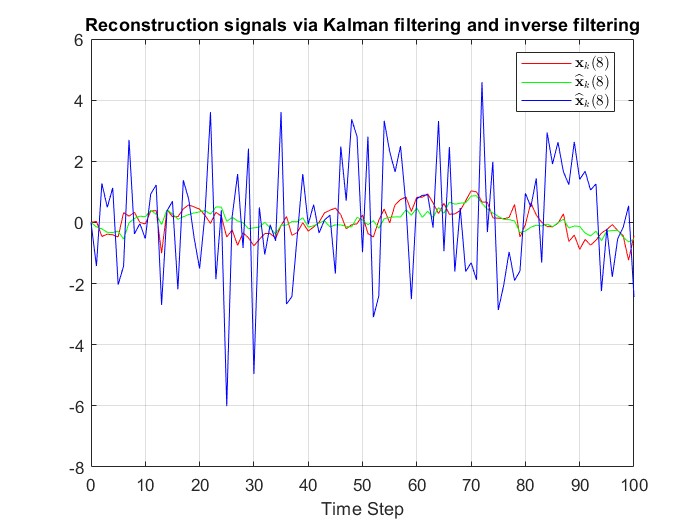}
\caption{Signal reconstruction via Kalman filtering and inverse filtering.}
\label{Kalman.fig}
\end{figure}

\bigskip

{\bf Acknowledgment}:\ 
The authors wish to express their  gratitude to Cheng Cheng  for her valuable 
insights into the Kalman filter.

\end{document}